\newtheorem{theorem}{Theorem}
\def\r {{\bf r}}
\def\brho {{\boldsymbol \rho}}
\DeclareMathOperator{\spn}{span}
\DeclareMathOperator{\Realp}{Re}
\DeclareMathOperator{\Imagp}{Im}
\def\beq{\begin{equation}}
\def\eeq{\end{equation}}
\def\beqn{\begin{eqnarray}}
\def\eeqn{\end{eqnarray}}
\DeclareFontFamily{OT1}{pzc}{}
\DeclareFontShape{OT1}{pzc}{m}{it}{<-> s * [1.10] pzcmi7t}{}
\DeclareMathAlphabet{\mathpzc}{OT1}{pzc}{m}{it}
\DeclareMathOperator{\sign}{sign}
\begin{document}

\begin{center}{\Large \textbf{
Conserving symmetries in Bose-Einstein condensate dynamics requires many-body theory
}}\end{center}

\begin{center}
Kaspar Sakmann\textsuperscript{1*},
J\"org Schmiedmayer\textsuperscript{1}
\end{center}

\begin{center}
{\bf 1 Technische Universit\"at Wien, Vienna Center for Quantum Technologies, 
Atominstitut, Stadionallee 2, 1020 Vienna, Austria} 
\\
*kaspar.sakmann@gmail.com
\end{center}

\begin{center}
\today
\end{center}


\section*{Abstract}
{\bf
We explain from first principles why satisfying conservation laws in Bose Einstein condensate dynamics requires many-body theory. 
For the Gross-Pitaevskii mean-field we show  analytically and numerically that conservation laws are violated.
We provide examples for angular momentum and linear momentum conservation.
Arbitrarily large violations occur despite negligible depletion and interaction energy.
For the case of angular momentum we show through extensive many-body simulations how the conservation law 
can be gradually restored on the many-body level. Implications are discussed.}

\vspace{10pt}
\noindent\rule{\textwidth}{1pt}
\tableofcontents\thispagestyle{fancy}
\noindent\rule{\textwidth}{1pt}
\vspace{10pt}

\section{Introduction}\label{Introduction}
The many-body Schr\"odinger equation describes the microscopic dynamics of Bose-Einstein condensates (BECs).
Conservation laws follow from invariances of the Hamiltonian under symmetry transformations. 
For instance, conservation of angular momentum is directly linked to invariance of the Hamiltonian 
under rotations and conservation of momentum to invariance of the Hamiltonian under translations.
For a purely distance dependent interparticle interaction and rotationally symmetric trapping potentials, the Hamiltonian and the angular momentum operator commute
and the exact dynamics conserves angular momentum for any initial state.
Similarly, in the absence of a trapping potential the Hamiltonian is translationally 
invariant and the total momentum is conserved in the exact dynamics for any initial state. 

However, in practice the many-body Schr\"odinger equation can rarely be solved exactly and 
an ansatz for the wave function must be made for which equations of motion are derived. 
The simplest approach is to expand the many-body wave function in a set of permanents constructed 
from a finite number of time-independent orbitals, e.g. harmonic oscillator eigenfunctions or the solution to some other single-particle problem. 
This direct diagonalization approach preserves the linearity and the symmetries of the many-body Schr\"odinger equation. 
Despite these  desirable properties, a direct diagonalization is hardly ever feasible for more than a few particles, because 
the many-body Hilbert space grows far too quickly with the number of particles and the number of spatial dimensions.

A very successful alternative to direct diagonalization builds on the time-dependent variational principle, put forward  
by Dirac and Frenkel \cite{Dirac,Frenkel}, see also \cite{KramerSaraceno}.
The idea is to include the orbitals themselves in the set of variational parameters and to let the time-dependent variational principle determine their shape.
Thereby, the Hilbert space that can be explored by an ansatz wave function is much larger than that 
of a corresponding expansion in predetermined orbitals.

The most famous example of this approach is Hartree-Fock theory, developed in the late 1920s for electrons, 
where the many-body wave function is approximated by a single Slater determinant, determined by the variational principle.
Unlike the Schr\"odinger equation the Hartree-Fock equations are nonlinear.
Nowadays it is well-known that their solutions violate conservation laws that are satisfied by the many-body Hamiltonian \cite{SchuckRing1980}.
However, it took several decades until it was noticed: only in 1963 P. O. L\"owdin pointed out {\it ``\dots the solution $D$ corresponding to the absolute [energy] minimum has 
now usually lost its eigenvalue property with respect to [the constant of the motion] $\Lambda$, 
i.e., the corresponding Hartree-Fock functions are no longer symmetry-adapted''} \cite{Lowdin1963}. 

Applying the Hartree-Fock ansatz to bosons leads to the well-known Gross-Pitaevskii (GP) mean-field approximation \cite{Gro61,Pit61}.
Analogous to the fermionic case, stationary GP solutions violate symmetries that are present in the many-body Hamiltonian.
For instance, the exact ground state of $N$ attractively interacting bosons in 1D free space is delocalized and satisfies the translational symmetry of the Hamiltonian, but  
the corresponding GP solution -- often called bright soliton solution-- is localized and violates it. see Ref. \cite{CalogeroDegasperis1975} 

Over the last half a century, many attempts were undertaken to restore symmetries of symmetry breaking mean-field solutions.
A very popular method  became known as the Hill-Wheeler integral equation, see e.g. \cite{SchuckRing1980, WamplerWilets1988}. 
An exhaustive list of its uses in nuclear and molecular physics is beyond the scope of this work. The following references provide symmetry restored approximations 
to the ground state of Bose-Einstein condensates. In \cite{CCI} the authors restore the translational symmetry of the symmetry-violating  GP ground state.
In \cite{LandmanPRL2004} the authors restore the rotational symmetry for a 2D strongly interacting BEC in a harmonic trap and 
in \cite{Rom08} the rotational symmetry of symmetry broken GP vortex cluster solutions.
Summarizing, not only the fact that stationary GP solutions violate conservation laws, but 
also the methods to restore such symmetries have been textbook material for nearly forty years \cite{SchuckRing1980}.

Here, we address exclusively the {\it time-dependent} problem.
Given that the many-body Hamiltonian commutes with an observable $A$ and an initial state, under what circumstances 
is $A$ conserved in the time-dynamics of a variational approximation?
To answer this question we analyze the properties of the equations of motion that follow
from the time-dependent variational principle. We use a many-body ansatz that includes the popular 
GP mean-field as a special case. This ansatz is known as the multiconfigurational time-dependent Hartree for bosons (MCTDHB) ansatz \cite{MCTDHB1,MCTDHB2}.
The MCTDHB ansatz wave function includes all permanents that can be constructed 
by distributing $N$ bosons over $M$ orbitals.
For $M\rightarrow\infty$ the MCTDHB method converges to the exact solution of the many-body Schr\"odinger equation; for the same accuracy MCTDHB
requires much fewer orbitals than an expansion in time-independent orbitals, see e.g. Refs. \cite{MCHB,HIM} for details.

Using the MCTDHB method it was found previously that the center of mass coordinate and total momentum operators provide 
sensitive probes of many-body correlations, e.g. through their variances \cite{VarianceSensitive1,VarianceSensitive2} 
and full counting distributions \cite{SakmannKasevich}, even if the condensate is only 
weakly interacting. 
Here, we are following this line of research. 

We provide a first principle explanation {\it why} observables that commute with the 
Hamiltonian can only be conserved on the many-body level. 
As a measure for the amount of the violation we track the time evolution of the second moment of 
the observable that commutes with the Hamiltonian.
The main example we provide is conservation of angular momentum for a BEC that is initially displaced from the center of 
a harmonic trap and is allowed to slosh back and forth.
We show analytically that the rate of the violation of angular momentum conservation is proportional to
the GP interaction parameter, the initial momentum and the initial displacement. 

In the corresponding numerical simulations we use an interaction strength that is so small, 
that the depletion and the interaction energy are practically negligible. 
Thus, one would expect no effects beyond GP mean-field to be of noticeable magnitude.
We show that this is not the case. Conserving angular momentum in the dynamics 
requires a far greater Hilbert space than that spanned by the GP mean-field ansatz, 
despite the very favorable parameter values. 

Finally, we address the free space case and the associated conservation of total momentum. 
In an analytic example we show that the Gross-Pitaevskii mean-field dynamics also violates 
the conservation of linear momentum. We provide the parametric dependence of this violation.


\section{General Theory}
\subsection{Many-body Schr\"odinger equation}
We use units where $\hbar=m=1$ and all spatial distances are measured in multiples of a length scale $L$. 
Formally, we divide the dimensional many-body Hamiltonian by $\hbar^2/(mL^2)$, where $m$ is the mass of a boson. 
The time-dependent many-body Schr\"odinger equation then reads 
\begin{equation}
i\frac{\partial \Psi}{\partial t}= H\Psi
\end{equation}
and 
\begin{equation}\label{SGl}
H=\sum_{j=1}^N  h(\r_j)+\sum_{j<k}W(\r_j-\r_k).
\end{equation}
Here, $\Psi=\Psi(\r_1,\dots,\r_N,t)$ is the many-body wave function,  $ h(\r)=-\frac{\Delta}{2}+V(\r)$ is the one-body part of the Hamiltonian,  
$V(\r)$ an external potential  and $W(\r)$ is a two-body interaction potential.

Given a complete set of orthonormal orbitals $\{\phi_j\}_{j=1,2\dots, \infty}$ 
the bosonic field operator is expanded as $\hat \Psi (\r) = \sum_i b_i \phi_i(\r)$ 
with $[ {\bf \Psi} (\r), {\bf \Psi}^\dagger(\r')]=\delta(\r-\r')$.
The operators $ b_j=\int d\r \phi_j(\r) {\bf \Psi}(\r)$ annihilate a boson in the orbital 
$\phi_j(\r)$ and satisfy $[ b_i, b_j^\dagger]=\delta_{ij}$. 

\subsection{Reduced-density matrices, Bose-Einstein condensation, Fragmentation}
Reduced density matrices are central to the theory of Bose-Einstein condensates \cite{Gla99,Sak08}.
Here and in the following, we suppress the time argument, whenever no ambiguity arises.  
The one- and two-particle reduced density matrices are defined as 
\begin{eqnarray}
\rho^{(1)}(\r\vert\r')&=& \langle\Psi\vert  {\bf \Psi}^\dagger(\r'){\bf \Psi}(\r)\vert\Psi\rangle\\\nonumber
&=&\sum_{i,j} \rho_{ij} \phi_i^\ast(\r')\phi_j(\r) \\\nonumber
\rho^{(2)}(\r_1,\r_2\vert\r_1',\r_2')&=&\langle\Psi\vert  {\bf \Psi}^\dagger(\r_1') {\bf \Psi}^\dagger(\r_2'){\bf \Psi}(\r_2){\bf \Psi}(\r_1)\vert\Psi\rangle\\\nonumber
&=&\sum_{i,j,k,l} \rho_{ijkl} \phi_i^\ast(\r_1') \phi_j^\ast(\r_2')\phi_k(\r_1)\phi_j(\r_2),
\end{eqnarray}
respectively and contain all information about one- and two-particle correlations. 
Their matrix elements are given by
\begin{eqnarray}\label{RDMs}
\rho_{ij}&=&\langle\Psi\vert b_i^\dagger b_j\vert\Psi\rangle\\\nonumber
\rho_{ijkl}&=&\langle\Psi\vert b_i^\dagger  b_j^\dagger b_k  b_l\vert\Psi\rangle. 
\end{eqnarray}
with $\sum_i \rho_{ii}=N$ and $\sum_i \rho_{iiii}=N(N-1)$. 
By diagonalizing the one-particle RDM one obtains 
\begin{eqnarray}
\rho^{(1)}(\r\vert\r')&=&\sum_{i} n^{(1)}_i \alpha_i^\ast(\r')\alpha_i(\r) 
\end{eqnarray}
where the eigenfunctions  $\alpha_i(\r)$ and eigenvalues $n^{(1)}_i$ are known as natural orbitals and natural occupations and satisfy
$\int d\r'\rho^{(1)}(\r\vert\r')\alpha_i(\r_i')=n^{(1)}_i\alpha_i(\r)$. 
It is common to order the eigenvalues decreasingly  $n_1^{(1)}\ge n_1^{(2)}\ge\dots$ and so on.
Bose-Einstein condesation is defined as follows. If one and only one eigenvalue $n_i^{(1)}=\mathcal O(N)$ exists, 
the system of bosons is said to be condensed \cite{PeO56}. 
If more than one such eigenvalue  exists, the system of bosons is said to be fragmented \cite{NoS82, leggett_book}.
In a pure condensate only one orbital is occupied and $n_1^{(1)}=N$ and all other natural occupations are zero. 
The case of a depleted condensate is obtained when $n_1^{(1)}\approx N$ and all other occupations are much smaller than $N$.
In this work we only study depleted condensates.

\subsection{Conservation of observables}\label{subConservation}
In the following $A$ denotes an observable defined as
\begin{equation} \label{defA}
 A=\sum_{j=1}^N  a_j=\sum_{ij}  b_i^\dagger b_j \langle \phi_i\vert  a \vert \phi_j \rangle
\end{equation}
where $a$ is a hermitian single-particle operator, e.g. $a=p$.
As is well known, an observable $A$ is conserved, if $[H,A]=0$. 
An immediate consequence is that all moments of $A$ are constant in time. 
This can be seen as follows. Using  $[H,A]=0$ and
\begin{eqnarray}
[H,A^n]&=&n A^{n-1}[H,  A]=0
\end{eqnarray}
as well as the Baker-Campbell-Hausdorff identity
\begin{equation}
e^F G e^{-F} = G +[F,G] + \frac{1}{2!}[F[F,G]]+\dots
\end{equation}
one finds $\langle A^n \rangle(t)= \langle \Psi(0)\vert e^{iHt}  A^n e^{-iHt}\vert\Psi(0)\rangle= \langle \Psi(0)\vert A^n \vert\Psi(0)\rangle$ and thus
\begin{equation}\label{Lzconserv}
\frac{d}{dt}\langle A^n \rangle(t)= 0,  
\end{equation}
for {\it any} initial state $\Psi(0)$. This implies that in order to fulfill a conservation law 
not only the expectation value $\langle A\rangle$, but also all higher moments $\langle A^n$, $n>1$  must be constant in time.
Another proof of (\ref{Lzconserv}) can be found in Appendix \ref{proof}.
Note that Eq. (\ref{Lzconserv}) holds for any quantum system, regardless of the number of particles
or the interaction strength. 

The operator $A$ in Eq. (\ref{defA}) is a one-body operator, i.e. it only involves the one-particle reduced density matrix. Using (\ref{RDMs}) its expectation value is given by
\begin{equation}\label{Lz}
\langle  A\rangle=\sum_{ij}\rho_{ij} \langle \phi_i\vert a\vert\phi_j\rangle. 
\end{equation}
$A^n$ on the other hand acts on up to $n$-particles and thus reduced density matrices up to order $n\le N$ are involved.
Specifically for the second power:
\begin{eqnarray}
 A^2&=&\sum_{j=1}^N  a_j^2 + \sum_{i\neq j}^N   a_i   a_j\label{A2op}\\
\end{eqnarray}
i.e. $ A^2$ is a sum of one- and two-body operators.
Using (\ref{RDMs}) the expectation value $\langle   A^2\rangle$ can be expressed as 
\beq\label{Lz2}
\langle  A^2\rangle=\sum_{i,j}\rho_{ij}\langle \phi_i \vert  a^2 \vert \phi_j\rangle + \sum_{ijkl} \rho_{ijkl} \langle \phi_i\vert a \vert \phi_k \rangle\langle\phi_j\vert a\vert\phi_l\rangle
\eeq
In this work we focus on the first and second moments of $A$. In Appendix \ref{Lz3} we discuss analogous results for the third moment, 
but the results presented below only require the first two moments of $A$.

We would like to add an experimental perspective  to the above. 
In BEC experiments it is often the case that an ensemble average is implicitly performed by 
simultaneously measuring the outcome of a large number of more or less similar copies of an experimental setup.
Such experiments measure the average value of an observable and thus cannot test whether $\frac{d}{dt}\langle A^n \rangle=0$ is satisfied for $n>1$. 
However, more recently a number of experiments have been carried out that measure full distribution functions \cite{Hofferberth, Betz,Gring,Kuhnert}. 
Such experiments are well-suited for measuring the time-independence of the moments $\langle A^n \rangle$, as required by Eq. (\ref{Lzconserv})
for a conserved observable.

\subsection{Distribution function of an observable}
Let us expand on what Eq. (\ref{Lzconserv}) means for an experiment in which $A$ is measured.
We write $\mathpzc{A}$ for the outcome of a measurement of $A$.
If the experiment is repeated many times the measurements $\{\mathpzc{A}_i\}_{i=1,2,\dots}$  are distributed 
according to 
\begin{equation}
\mathpzc{A}\sim P(\mathpzc{A}).
\end{equation}
Mathematically, Eq. (\ref{Lzconserv}) then expresses the fact that
$P(\mathpzc{A})$  of a conserved observable $A$ is stationary. 
This can be rigorously proven assuming sufficient regularity of the moments: 
in this case $P(\mathpzc{A})$ has an analytic characteristic function
\begin{equation}\label{charfun}
\chi(q) = \langle e^{iqA} \rangle = \sum_{n=0}^\infty \frac{(iq)^n}{n!}\langle A^n\rangle
\end{equation}
and can be uniquely reconstructed from its moments through
\begin{equation}\label{Pdist}
P(\mathpzc{A})=\frac{1}{2\pi}\int_{-\infty}^\infty dq e^{-iq\mathpzc{A}} \chi(q).
\end{equation}
Now consider the case where $A$ has a discrete spectrum $A\Psi_j={\mathpzc A}_j \Psi_j$ and is conserved, $[H,A]=0$.
Expanding the wave function $\Psi(t)=\sum_j c_j(0)e^{-iE_jt}\Psi_j$ in the eigenfunctions common to $A$ and $H$ one obtains
\begin{equation}
\langle A^n \rangle(t) = \sum_{j}\vert c_j(t)\vert^2  {\mathpzc A}_j^n
\end{equation}
and substituting into (\ref{Pdist}) one arrives at
\begin{equation}
P(\mathpzc{A}) = \sum_j \vert c_j(t)\vert^2 \delta({\mathpzc A}- {\mathpzc A}_j)  = \sum_j \vert c_j(0)\vert^2 \delta({\mathpzc A}- {\mathpzc A}_j)
\end{equation}
which is time-independent. Thus, conserved observables have stationary probability distributions $P(\mathpzc{A})$.
For details on the {\it  moment problem}, see e.g. \cite{Klenke}. For its application to quantum mechanics, see e.g. \cite{Schwabl}.

\subsection{Gedankenexperiment}\label{Gedankenexperiment}
As an instructive example consider a {\it Gedankenexperiment} in which a 
single particle is prepared in an initial state $\phi$ and after some time 
either momentum or position are measured.
We assume that the particle propagates in free space in one dimension, i.e. the  Hamiltonian is $H=\frac{p^2}{2}$ 
and $\r=x$. We write $\tilde \phi(p)$ for the wave function in momentum space.
Clearly, $[H,p]=0$, so momentum is conserved.
To see that Eq. (\ref{Lzconserv}) is satisfied we define 
$E_p=\frac{p^2}{2}$ and $\tilde \phi(p)= \tilde \phi(p,t=0)$. Then using
\begin{equation}
\phi(x,t)=\frac{1}{2\pi}\int dp \tilde \phi(p,t) e^{ipx}
\end{equation}
one finds
\begin{eqnarray}
\langle p^n \rangle(t) &=& \int dx \phi(x,t)\left(-i\frac{d}{dx}\right)^n \phi(x,t) \nonumber\\
&=& \iiint \frac{1}{(2\pi)^2} dp' dp dx \tilde\phi^\ast(p') e^{-i(p'x-E_p't)} p^n \tilde\phi(p) e^{i(px-E_pt)}\nonumber\\
&=& \int dp dp' \frac{1}{2\pi}\delta(p-p') \vert \phi(p)\vert^2 p^n \nonumber\\
&=& \int dp \frac{1}{2\pi} \vert \phi(p)\vert^2 p^n \nonumber\\
&=& \langle p^n \rangle(0)
\end{eqnarray}
and hence Eq. (\ref{Lzconserv}) is fulfilled.
Let us now specify the shape of the wave packet. We consider a particle at rest with a wave function given by a Gaussian
\begin{equation}\label{GWP}
\phi(x,0)=\left(\frac{1}{\pi \sigma_0^2}\right)^{1/4} e^{-\frac{x^2}{2\sigma_0^2}}.
\end{equation}
Propagating the Schr\"odinger equation for the wave packet (\ref{GWP})
one finds  
\begin{equation}
\langle x^2 \rangle(t) = \frac{\sigma_0^2}{2} \left ( 1 + \left(\frac{t}{\sigma_0^2}\right)^2\right)
\end{equation}
which spreads during time evolution. This is completely expected, because $[H,x]=-ip\neq 0$, i.e. $x$ is not conserved.
 
\begin{figure}
\begin{center}
 \includegraphics[angle=0, width=7.8cm]{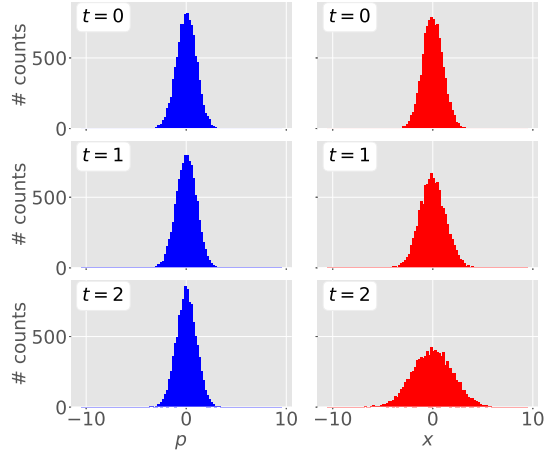}%
 \caption{{\bf  Gedankenexperiment.} Shown are $10^4$ samples of the exact momentum (left) and position (right) distributions  
		for a single particle in one-dimensional free space at different times. Momentum is conserved and accordingly 
		the momentum distribution is stationary. However, the position operator does not commute with the Hamiltonian 
		and the position distribution is 
		time-dependent. The initial wave function is a Gaussian of width $\sigma_0=\sqrt{2}$. 
		See text for details. All quantities shown are dimensionless.}
 \label{Fconserv}
\end{center}
\end{figure}
Fig. \ref{Fconserv} shows random samples from the exact position and momentum distributions during the time evolution of the wave packet 
(\ref{GWP}) with initial width $\sigma_0=\sqrt{2}$. The time-dependence is clearly visible in the position distribution,
while the momentum distribution is stationary. An experimenter would correctly conclude that momentum 
is conserved, but position is not. Note that the expectation value of the position operator, 
$\langle x\rangle=0$, is time-independent, even though $x$ is not conserved $[H,x]\neq 0$ in this example. 
This example nicely demonstrates that it cannot be concluded that an observable $A$ is conserved, based merely on $\frac{d}{dt}\langle A \rangle=0$.

\section{The time-dependent variational principle}
The theory developed so far is exact. 
However, it only applies to exact analytical solutions of the many-body Schr\"odinger equation and only few analytical solutions are known. 
In order to describe realistic quantum many-body systems  approximations are inevitable and in practical computations 
an ansatz for the many-body wave function using a finite number of basis vectors needs to be made.

\subsection{Dirac-Frenkel formulation}
The Dirac-Frenkel variational principle, analogous to d'Alembert's principle in classical mechanics, states
that for a given parametrized ansatz wave function $\Psi$ the dynamics is to be determined by the condition  \cite{Dirac,Frenkel, KramerSaraceno}
\begin{equation}\label{DFVP}
\langle \delta \Psi \vert (H-i\frac{\partial}{\partial t}) \vert\Psi\rangle =0.
\end{equation}
In (\ref{DFVP}), $\delta \Psi$ denotes any variation that lies in the space $\{\delta \Psi\}$ of linear variations of $\Psi$. 
In words the principle states that the residual $(H-i\frac{\partial}{\partial t}) \vert\Psi\rangle$ is orthogonal to the space of allowed variations.
The space of allowed variations $\{\delta \Psi\}$ is assumed to be linear complex here, 
i.e. if $\delta\Psi$ is an allowed variation, then $i\delta\Psi$ is also an allowed variation \cite{Dirac,KramerSaraceno}.

There are two other commonly employed variational principles, namely the McLachlan variational principle \cite{McLachlan}, 
and a Lagrangian formulation, where stationarity of the action integral is required, also known as the principle of least action \cite{KramerSaraceno}. 
These two alternative variational principles amount to restricting (\ref{DFVP})  to its imaginary or real part, respectively \cite{KucarMeyerCederbaum}. 
If the space $\{\delta \Psi\}$ is linear complex
all three variational principles are equivalent \cite{KucarMeyerCederbaum}.
The complex linear case is the most common and the Dirac-Frenkel formulation is then preferable due to its simplicity.
For ansatz wave functions with only real linear spaces $\{\delta \Psi\}$, e.g. ansatz wave functions that depend on purely real parameters, 
the different variational principles are not equivalent and lead to different equations of motion \cite{KucarMeyerCederbaum}.

\subsection{Conservation laws and the variational principle}
First we review an important result about conservation laws satisfied by ansatz wave functions
that follow equations of motion determined by the Dirac-Frenkel variational principle \cite{MCTDHPhysRep,LubichPoisson}.

\begin{theorem}\label{Thm1}
{\em Let the self-adjoint operator $ A$ commute with the Hamiltonian $ H$, i.e. 
$[ H,  A]=0$. If $A\Psi$ lies in the space of the allowed variations 
$\{\delta\Psi\}$ of the ansatz wave function $\Psi$ 
then $\langle  A\rangle$ is time-independent, i.e.
\begin{equation} \label{thm}
\frac{d}{dt}\langle  A \rangle=0.
\end{equation} }
\end{theorem}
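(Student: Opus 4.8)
The plan is to compute $\frac{d}{dt}\langle A\rangle$ directly from the Dirac--Frenkel condition (\ref{DFVP}) and exploit the hypothesis that $A\Psi$ is an admissible variation. First I would write $\frac{d}{dt}\langle A\rangle = \langle \dot\Psi\vert A\vert\Psi\rangle + \langle\Psi\vert A\vert\dot\Psi\rangle = 2\,\Realp\langle\Psi\vert A\vert\dot\Psi\rangle$, using self-adjointness of $A$. The goal is then to show the right-hand side equals $2\,\Realp\langle\Psi\vert A\vert(-iH)\vert\Psi\rangle = 2\,\Imagp\langle\Psi\vert AH\vert\Psi\rangle$, and that this vanishes because $[H,A]=0$ makes $AH$ self-adjoint (so its expectation value is real, hence its imaginary part is zero).

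The key step connecting $\vert\dot\Psi\rangle$ to $-iH\vert\Psi\rangle$ is precisely where the variational hypothesis enters. The residual $R = (H - i\frac{\partial}{\partial t})\vert\Psi\rangle = H\vert\Psi\rangle - i\vert\dot\Psi\rangle$ is, by (\ref{DFVP}), orthogonal to every allowed variation $\delta\Psi$. Since $A\Psi$ is assumed to lie in $\{\delta\Psi\}$, we have $\langle A\Psi\vert R\rangle = 0$, i.e. $\langle\Psi\vert A H\vert\Psi\rangle = i\langle\Psi\vert A\vert\dot\Psi\rangle$. Taking the real part of both sides, $\Realp\langle\Psi\vert AH\vert\Psi\rangle = \Realp\big(i\langle\Psi\vert A\vert\dot\Psi\rangle\big) = -\Imagp\langle\Psi\vert A\vert\dot\Psi\rangle$. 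I still need the other combination, so I would combine this with the analogous statement obtained by noting that $iA\Psi$ is also an admissible variation (the space $\{\delta\Psi\}$ is assumed linear complex), which gives $\langle iA\Psi\vert R\rangle=0$, i.e. $-i\langle\Psi\vert AH\vert\Psi\rangle = \langle\Psi\vert A\vert\dot\Psi\rangle$, equivalently $\langle\Psi\vert A\vert\dot\Psi\rangle = -i\langle\Psi\vert AH\vert\Psi\rangle$ exactly. Substituting into $\frac{d}{dt}\langle A\rangle = 2\,\Realp\langle\Psi\vert A\vert\dot\Psi\rangle$ yields $\frac{d}{dt}\langle A\rangle = 2\,\Realp\big(-i\langle\Psi\vert AH\vert\Psi\rangle\big) = 2\,\Imagp\langle\Psi\vert AH\vert\Psi\rangle$.

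The last step is to show $\langle\Psi\vert AH\vert\Psi\rangle$ is real. Because $A$ and $H$ are each self-adjoint and $[H,A]=0$, the product $AH = HA$ is self-adjoint, so its expectation value in any state is real and its imaginary part vanishes; hence $\frac{d}{dt}\langle A\rangle = 0$.

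I expect the main obstacle to be bookkeeping care about what "allowed variation" means and the complex-linearity assumption: one must be sure that both $A\Psi$ and $iA\Psi$ lie in $\{\delta\Psi\}$ so that the full complex equation $\langle A\Psi\vert R\rangle=0$ (not just its real part) is available; without this, only $\frac{d}{dt}\langle A\rangle=0$ fails to follow cleanly. A secondary subtlety is domain/self-adjointness issues for unbounded $A$ and $H$ (e.g. $A=p$, $A=L_z$) — whether $A\Psi$ is genuinely in the relevant domain and in $\{\delta\Psi\}$ — but at the level of this paper I would treat these formally, exactly as the exact-theory argument around Eq. (\ref{Lzconserv}) does, and simply remark that the hypothesis "$A\Psi\in\{\delta\Psi\}$" is what must be checked case by case for a concrete ansatz such as MCTDHB or GP.
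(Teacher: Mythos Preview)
Your proof is correct and follows essentially the same route as the paper: write $\frac{d}{dt}\langle A\rangle = 2\,\Realp\langle A\Psi\vert\dot\Psi\rangle$, invoke the Dirac--Frenkel condition with $\delta\Psi=A\Psi$ (using complex linearity of $\{\delta\Psi\}$) to replace $\vert\dot\Psi\rangle$ by $-iH\vert\Psi\rangle$, and then conclude from $[H,A]=0$. The only cosmetic difference is that the paper phrases the last step as $\langle\Psi\vert[H,A]\vert\Psi\rangle=0$, whereas you phrase it as ``$AH$ is self-adjoint so its expectation value is real''; these are equivalent, and your explicit remark that both $A\Psi$ and $iA\Psi$ must lie in $\{\delta\Psi\}$ makes transparent a point the paper uses implicitly.
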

\begin{proof}
\begin{eqnarray}
\frac{d}{dt}\langle  A \rangle&=&2 \Realp \langle  A \Psi  \frac{\partial}{\partial t}\vert\Psi\rangle\nonumber\\ 
&=&-2 \Imagp \langle  A\Psi\vert  H\vert \Psi\rangle\nonumber\\
&=&2\langle\Psi\vert [ H,  A]\vert\Psi\rangle=0,
\end{eqnarray}
using (\ref{DFVP}) and  $ A\Psi\in \{\delta\Psi\}$ in the second equality. 
\end{proof}

The result demonstrates the importance of the space $\{\delta \Psi\}$. 
In general $ A\Psi$ does not lie entirely in $\{\delta \Psi\}$, 
depending on the ansatz wave function and the operator $ A$. 
However, the larger $\{\delta \Psi\}$ is, the smaller the error will be.
Conversely, if $[H,A]=0$ and $A\Psi$ lies 
in the space $\{\delta \Psi\}$, the equations of motion derived from the Dirac-Frenkel variational principle (\ref{DFVP}) 
are guaranteed to keep the first moment $\langle  A\rangle$ constant at all times. 
In the following we will investigate the structure of the space $\{\delta \Psi\}$ for different ansatz wave functions.

\section{Multiconfigurational Time-Dependent Hartree for bosons}
\subsection{Equations of motion}
A particularly successful application of the time-dependent variational principle for ultracold bosons is the 
multiconfigurational time-dependent Hartree for bosons (MCTDHB) method \cite{MCTDHB1,MCTDHB2}. 
As opposed to the usual approach of expanding a many-body  wave function
in a set of fixed orbitals, the orbitals in the MCTDHB method are allowed to depend 
on time and are included in the set of variational parameters of the ansatz wave function.
We briefly summarize the MCTDHB method. To this end we consider $M$ 
time-dependent orbitals $\{\phi_j(\r,t)\}, {j=1,\dots,M}$. The MCTDHB ansatz for the many-boson 
wave function then reads
\begin{equation}\label{PSIMCTDHB}
\vert\Psi(t)\rangle = \sum_{\vec n} C_{\vec n}(t)  |\vec n;t\rangle, 
\end{equation}
where the sum over $\vec n = (n_1,\dots,n_M)^T$ runs over all $N+M-1 \choose N$ permanents
\begin{equation}
\vert \vec n;t\rangle =\frac{ b_1^{\dagger^{n_1}}(t)  b_2^{\dagger^{n_2}}(t)\dots b_m^{\dagger^{n_m}}(t)}{\sqrt{n_1!n_2!\dots n_M!}}\vert vac\rangle
\end{equation} 
that can be constructed by distributing $N$ bosons over the $M$ orbitals $\{\phi_j(x,t)\}$. The $\{C_{\vec n}(t)\}$ are 
time-dependent expansion coefficients.  
The equations of motion for the coefficients $\{C_{\vec n}(t)\}$ and the orbitals $\{\phi_j(x,t)\}$
can be derived from the Dirac-Frenkel variational principle (\ref{DFVP}), see \cite{MCTDHB2} for the derivation, 
which is lengthy but otherwise straightforward.
The final result is a coupled system of equations and reads:
\begin{equation}\label{phidot}
 i\frac{\partial}{\partial t}\vert\phi_j\rangle = {\mathbf P} \left[h |\phi_j\rangle + \lambda_0 \sum_{k,s,l,q=1}^M \{\brho(t)\}^{-1}_{jk}  
 \rho_{kslq} \phi_s^\ast \phi_l |\phi_q\rangle \right], 
\end{equation}
with $j=1,\ldots,M$ for the time-dependent orbitals
and
\begin{equation}\label{S10}
 \mathbf{H}(t) \mathbf{C}(t) = i \frac{\partial}{\partial t}{\mathbf{C}}(t), \qquad H_{\vec n\vec n'}(t) = \langle\vec n;t| H|\vec n';t\rangle
\end{equation}
for the expansion coefficients.
Here, ${\mathbf P} = 1 - \sum_{u=1}^M |\phi_u\rangle\langle\phi_u|$ is a projector 
on the complementary space of the time-dependent orbitals, 
${\boldsymbol \rho}(t)$ is the reduced one-particle density matrix with elements $\rho_{ij}$,
$\rho_{kslq}$  are the elements of the two-particle reduced density matrix 
and $\mathbf{C}(t) = \{C_{\vec n}(t)\}$ is the vector of expansion coefficients.

The use of $M$ optimized time-dependent orbitals leads to much faster numerical convergence 
to the full many-body Schr\"odinger results than an expansion in $M$ time-independent orbitals.
Thereby, problems involving large numbers of bosons can be solved on the full-many-body Schr\"odinger level.
For instance converged results in closed \cite{BJJexact,SakmannKasevich} and open \cite{PNAS}
trap potentials have been obtained and the method has been favorably benchmarked against an exactly-solvable many-boson model \cite{HIM}.
In practice, Eq. (\ref{S10}) is simplified by a mapping of the configuration space
which enables the efficient handling of millions of time-adaptive permanents \cite{mapping}.

\subsection{Space of allowed variations}
The space of allowed variations for the MCTDHB ansatz (\ref{PSIMCTDHB}) is obtained by performing variations of the wave function with respect to its coefficients
$C_{\vec n}(t)$ and its orbitals $\{\phi(\r,t)\}$.  
The variation with respect to the coefficients leads to
\begin{equation}
\frac{\delta \langle \Psi \vert}{\delta  C_{\vec n}^\ast(t)} =\langle\vec n;t\vert. 
\end{equation}
Thus $\{\delta\Psi\}$ contains all linear combinations of permanents $\langle\vec n;t\vert$. In particular $\vert \Psi(t)\rangle$ 
itself lies in $\{\delta \Psi\}$. Conservation of the norm of the wave function thus follows from putting $ A=1$ in Theorem (\ref{Thm1}).

For variations with respect to the orbitals one can derive the following useful identities \cite{MCTDHB2}.
With $ {\bf \Psi}(\r)=\sum_{k} \phi_k(\r,t)  b_k(t)$ and $ b_k(t)=\int\phi_k(\r,t) {\bf \Psi}(\r)$ 
it follows that
\begin{equation} \label{varb}
\frac{\delta  b_k(t)}{\delta \phi_q^\ast(\r,t)}=\delta_{kq}{\bf \Psi}(\r). 
\end{equation}
Furthermore, using (\ref{varb}) the variation of a permanent is given by 
\begin{equation}
\frac{\delta \langle \vec n;t\vert}{\delta \phi_q^\ast(\r,t)}=\langle \vec n;t\vert  b^\dagger_q(t){\bf \Psi}(\r) 
\end{equation} 
By summing over the permanents one obtains 
\begin{equation}\label{deltphi}
\frac{\delta \langle \Psi \vert}{\delta \phi_q^\ast(\r,t)}= \langle\Psi\vert \sum_{k} b^\dagger_q(t) b_k(t) \phi_k(\r,t). 
\end{equation}

\subsection{Stationarity of the expectation values of one-body observables}\label{subConservationOneBody}
We show that the expectation values of one-body observables are stationary. To this end consider the functional
\begin{equation}
F=\langle \Psi\vert ( H -i\frac{\partial}{\partial t})\vert \Psi\rangle
\end{equation}
and the variation $\delta {\phi^\ast_q(\r,t)}$. 
Using (\ref{DFVP}) and (\ref{deltphi}) one finds
\begin{eqnarray}\label{deltaF}
\delta F &=& \int d\r \frac{\delta F}{\delta \phi^\ast_q(\r,t)} \delta \phi^\ast_q(\r,t)\\\nonumber
&=&\langle\Psi\vert \sum_{k} a_{qk}(t) b^\dagger_q(t) b_k(t)\left(H -i\frac{\partial}{\partial t}\right)\vert \Psi\rangle=0.
\end{eqnarray}
with $a_{qk}(t)=\int d\r  \delta \phi^\ast_q(\r,t)\phi_k(\r,t)$.
By summing (\ref{deltaF}) over $q$ we arrive at
\begin{equation}
\langle\Psi\vert  A  \left(H -i\frac{\partial}{\partial t}\right)\vert \Psi\rangle=\langle A\Psi\vert  \left(H -i\frac{\partial}{\partial t}\right)\vert \Psi\rangle=0 
\end{equation}
with the operator $ A=\sum_{qk}  a_{qk}(t) b^\dagger_q(t) b_k(t)$. 
Thus for any one-body operator the vector $\vert  A \Psi\rangle$ lies in the space of allowed variations. Formally we can write 
\begin{equation}\label{deltaPsi}
\{\delta \Psi\}=\spn\{\vert \vec n;t\rangle, \vert  A \Psi \rangle\}, 
\end{equation}
i.e. the space of allowed variations is the span of all permanents in the ansatz wave function and all 
vectors that can be constructed by applying a one-body operator to $\Psi$.
If $[H, A]=0$,  it follows from Theorem (\ref{Thm1}) that $\langle A\rangle$ is time-independent. Thus one-body operators that commute with $ H$ 
are guaranteed to have stationary expectation values by the MCTDHB equations of motion, even if only $M=1$ orbital is used.
Note that vectors of the form $\vert  B \Psi \rangle$, where $ B$ is a two-body operator
will generally not lie entirely in the space of allowed variations. 
A very relevant example would be for instance $B=A^2$. For this reason it is important, that $\{\delta \Psi\}$ is a sufficiently large space. 

\subsection{Gross-Pitaevskii mean-field theory}\label{GPtheory}
The Gross-Pitaevskii mean-field equation is obtained from the MCTDHB ansatz wave function (\ref{PSIMCTDHB}) by setting
$M=1$ and a contact interaction  $W(\r)=\lambda_0\delta(\r)$ for the interaction potential.
The only contributing permanent is 
\begin{equation}\label{GPPsi}
\vert \vec n;t\rangle=\vert N;t\rangle,
\end{equation}
where  $\langle x \vert b^\dagger(t) \vert vac\rangle=\phi(x,t)$.
The MCTDHB equation of motion for the orbital (\ref{phidot}) then reduces to
\begin{eqnarray}\label{GPE1}
h_{GP} &=&  h + \lambda \vert \phi \vert^2 \\
i\frac{\partial}{\partial t}|\phi\rangle &=&   h_{GP} |\phi\rangle\label{GPE2}, 
\end{eqnarray}
where we used $\rho_{11}=N$, $\rho_{1111}=N(N-1)$, defined $\lambda = \lambda_0(N-1)$ and have eliminated the  
projector $ {\bf P} = 1-\vert \phi\rangle\langle\phi\vert$ by the global phase transformation
$\phi\rightarrow\phi e^{i\int^t \mu(t) dt' }$, with  $\mu(t)=\langle \phi \vert (h + \lambda_0 (N-1)\vert\phi\vert^2) \vert \phi \rangle$.
All dynamics in Gross-Pitaevskii theory occurs in the orbital $\phi(\r,t)$. 

The space $\{\delta \Psi\}$ in the GP mean-field approximation is the subset of (\ref{deltaPsi})
that contains only the state $\Psi$ itself and all vectors that can be contructed by applying a one-body operator to $\Psi$.
\begin{equation}\label{deltaPsiGP}
\{\delta\Psi\}=\spn\{\vert \Psi\rangle, \vert  A \Psi \rangle\}.
\end{equation}
Therefore, the expectation value $\langle A \rangle$ of any one-body operator $A$, that commutes with the full many-body Hamiltonian 
is time-independent in the GP dynamics, $\frac{d}{dt}\langle A \rangle=0$. As illustrated above in $\frac{d}{dt}\langle A \rangle=0$ 
does not imply, that $A$ is conserved. A necessary condition for $A$ to be conserved is that {\it all} moments $\frac{d}{dt}\langle A^n \rangle=0$ 
are time-independent, see Eq. (\ref{Lzconserv}).
To what extent the observable $A$ is conserved in the GP dynamics depends entirely on the extent to which 
the vectors $A^n\Psi$ are contained in the space (\ref{deltaPsiGP}). This in turn depends strongly
on the operator $A$ as well as the current state $\Psi$ and is thus a function of time.

\section{Mean-Field time-derivative of the moments}\label{secMFTD}
We assume $[H,A]=0$ for a one-body observable $A$ as defined in Eq. (\ref{defA}).
Choosing the many-body state (\ref{GPPsi}) the expressions (\ref{Lz}) and (\ref{Lz2}) then reduce  to
\begin{eqnarray}
\langle  A\rangle_{GP}&=&N\langle \phi \vert  a \vert \phi \rangle\label{MFA}\\
\langle  A^2\rangle_{GP}&=&N[\langle\phi  \vert a^2 \vert \phi \rangle + (N-1)\langle \phi \vert   a \vert \phi\rangle^2]\label{MFA2}.
\end{eqnarray}
and by using Eqs. (\ref{GPE1}-\ref{GPE2})  
\begin{eqnarray}\label{timederivativeA}
\frac{d}{dt}\langle a^n \rangle &=&2\Realp\langle \phi\vert  a^n \vert \frac{\partial}{\partial t} \phi\rangle\nonumber\\
&=&2\Imagp\langle \phi \vert a^n h_{GP}\vert\phi\rangle\nonumber\\
&=&-2\lambda\Imagp\langle \phi \vert \vert\phi\vert^2  a^n\vert\phi\rangle
\end{eqnarray}
Taking the time derivative of Eqs. (\ref{MFA}-\ref{MFA2}) and using Theorem \ref{Thm1}, leads to
\begin{eqnarray}
\frac{d}{dt}{\langle  A   \rangle}_{GP}&=&0 \label{GPviolatesA}\\
\frac{d}{dt}{\langle  A^2 \rangle}_{GP}&=&-2N\lambda \Imagp\langle \phi \vert \vert\phi\vert^2  a^2\vert\phi\rangle\label{GPviolatesA2}
\end{eqnarray}
From Eq. (\ref{GPviolatesA2}) it is clear that the GP dynamics violates the conservation of $A$ if the right-hand side is nonzero:
$[H,A]=0$ leads to stationary moments $\frac{d}{dt}{\langle  A^n \rangle=0}$ and thus to 
a stationary distribution $P(\mathpzc{A})$, see Eqs. (\ref{charfun}-\ref{Pdist}), in contrast to Eq. (\ref{GPviolatesA2}).

An interesting point to note is that the contribution to (\ref{GPviolatesA2}) coming from the two-particle reduced density matrix is zero. 
This is due to the fact that in GP theory Theorem \ref{Thm1} implies $\frac{d}{dt}{\langle \phi \vert  a \vert \phi \rangle}=0$. 
Therefore, all terms involving $\frac{d}{dt}{\langle \phi \vert  a \vert \phi \rangle}$ cancel also in the calculation of higher moments, 
see Appendix \ref{Lz3}. 

\section{Mean-Field violation of angular momentum conservation}
In this section we set $A=L_z$, where
\begin{equation}
L_z=\sum_{j=1}^N l_{z_j},
\end{equation}
is the total angular momentum operator about the $z$-axis with $l_z= x  p_y- y  p_x$. 
Now consider a cylindrically symmetric external potential $V(\r)$, such that $[H,L_z] = 0$.
Using the orbital
\begin{equation}\label{orbtzero}
\phi(\r,0)=\left(\frac{1}{\pi \sigma_0^2}\right)^{D/4}e^{-\frac{(\r-\r_0)^2}{2 \sigma_0^2}}e^{ip_0x},
\end{equation}
we evaluate Eqs. (\ref{GPviolatesA}-\ref{GPviolatesA2}). 
For $\r_0=(x_0,y_0)$ in $D=2$ and $\r_0=(x_0,y_0,z_0)$ in $D=3$ dimensions the result is 
\begin{eqnarray}
\frac{d}{dt}{\langle  L_z   \rangle}_{GP}&=&0\label{GPviolation1}\\
\frac{d}{dt}{\langle  L_z^2 \rangle}_{GP}&=&-N\lambda\frac{p_0x_0}{(2\pi \sigma_0^2)^{D/2}}\neq 0\label{GPviolation2}
\end{eqnarray}
Equation (\ref{GPviolation1}) expresses the fact that the expectation values of conserved one-body operators are time-independent  
in the GP mean-field approximation, see Eq. (\ref{deltaPsiGP}) and Theorem \ref{Thm1}. 
However, Eq. (\ref{GPviolation2}) is non-zero and  constitutes an explicit violation of angular momentum conservation, $[H,L_z]=0$.
Higher moments, e.g. ${\langle L_z^3\rangle}_{GP}$ can also be evaluated and are also time-dependent, see Appendix \ref{Lz3}. 

\subsection{Parametric dependence}
First we point out that the violation of angular momentum conservation 
in Eq. (\ref{GPviolation2}) originates from $\frac{d}{dt}{\langle l_z^2 \rangle}$, 
i.e. from the one-body part of (\ref{Lz2}), not the two-body part.
The contribution to $\frac{d}{dt}{\langle L^2 \rangle}$ from the two-body part of (\ref{Lz2})
is precisely zero here.

Also note that the expression (\ref{GPviolation2}) scales linearly with the GP interaction parameter $\lambda$.
Thus, for $\lambda = 0$, when the single-particle Schr\"odinger equation is recovered, the violation vanishes as expected. 
Remarkably, the violation (\ref{GPviolation2}) is not a finite $N$ effect. A popular limit 
to analyze the properties of infinite trapped systems is to keep $\lambda$  constant and to let $N\rightarrow\infty$. However, in this limit (\ref{GPviolation2}) diverges, 
because even the violation per particle 
\begin{equation}\label{violation_ind_n}
\frac{d}{dt}\langle L_z^2 \rangle_{GP}/N = -\lambda\frac{p_0x_0}{(2\pi \sigma_0^2)^{D/2}}
\end{equation}
is constant, independent of $N$. 
For any given value of the interaction strength $\lambda$ the derivative of the violation $\frac{d}{dt}\langle  L_z^2 \rangle_{GP}/N$ can be arbitrarily large, 
since Eq. (\ref{violation_ind_n}) then defines a hyperbola between the parameters $x_0$ and $p_0$. 
This error surface is shown in Fig. {\ref{F0}} for $\lambda=\sigma_0=1$ in $D=2$ dimensions. 
In other words, in GP theory the violation of angular momentum conservation 
can be arbitrarily large for any given value of the interaction strength $\lambda \neq 0$. 
\begin{figure}
\begin{center}
 \includegraphics[angle=0, width=10cm]{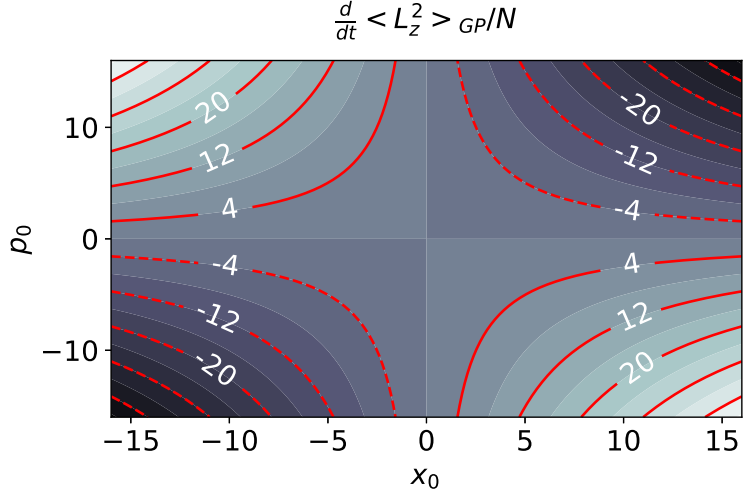}%
 \caption{{\bf  Surface of angular momentum violations.} Shown is a contour plot of the time-derivative of the second moment of $L_z/N$ in Gross-Pitaevskii 
		theory, $\frac{d}{dt}\langle L_z^2 \rangle_{GP}/N$, see Eq. (\ref{violation_ind_n}),  
		as a function of the initial displacement $x_0$ and the initial momentum $p_0$ (all other parameters constant $\lambda=1$, $\sigma_0=1$, $D=2$). 
		The correct value for an angular momentum conserving theory is zero.  
		$\frac{d}{dt}\langle L_z^2 \rangle_{GP}/N$ can take on any value between $-\infty$ and $+\infty$. Thus, 
		the violation of angular momentum conservation can be arbitrarily large at any (nonzero) interaction strength. All units are dimensionless.}
 \label{F0}
\end{center}
\end{figure}
From the expression (\ref{GPviolation2}) it can be seen that $\frac{d}{dt}\langle L_z^2 \rangle_{GP}$ scales  linearly with 
the displacement of the wave packet $x_0$ from the center of the trap and the momentum $p_0$ along that direction.
Displacements $y_0$ along the direction orthogonal to $p_0$ do not affect $\frac{d}{dt}\langle L_z^2 \rangle_{GP}$. 
However, they enter higher moments (which are also time dependent), see Appendix \ref{Lz3}.

The reason for this failure of GP theory is explained by Theorem (\ref{Thm1}). The space of allowed variations $\{\delta\Psi\}$ 
of GP theory, see  (\ref{deltaPsiGP}),  is too small to accomodate the vector $L_z^2\vert\Psi\rangle$. 
Angular momentum can therefore only be conserved if the space of allowed variations $\{\delta\Psi\}$ is enlarged. Within the 
MCTDHB framework this entails using  more orbitals in the ansatz (\ref{PSIMCTDHB}). 
It is not possible to tell {\it a priori} how many more orbitals are needed to conserve angular momentum to an acceptable level, because 
this depends on the shape of the tangent space $\{\delta \Psi\}$ at $\Psi(t)$. 

\section{Sloshing of a BEC in a harmonic trap}
In this section we provide a simple numerical example in $D=2$ dimensions that illustrates the violation of angular momentum conservation by GP mean-field theory
and the gradual restoration of angular momentum conservation on the many-body level.

\begin{figure}
\begin{center}
 \includegraphics[angle=0, width=5.8cm]{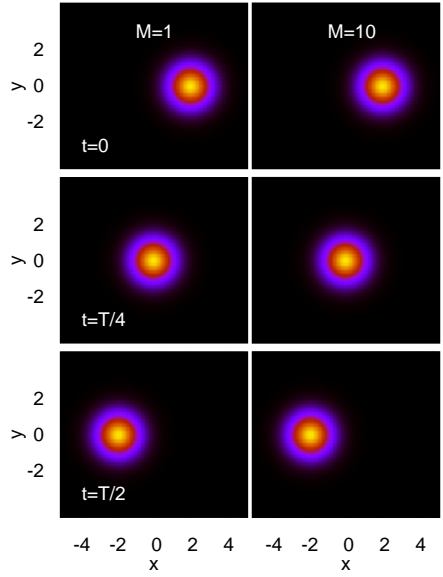}%
 \caption{{\bf  Sloshing of a BEC in a harmonic trap.} Shown is the single-particle density $\rho(\r,t)$ using different numbers of orbitals. 
	The BEC oscillates in a harmonic trap with $\omega_x = \omega_y = 1$. 
	The initial state is the ground state of the trap displaced to $(x_0,y_0)=(2,0)$. The interaction is weak with the ratio of interaction to kinetic energy of the initial state being $6\%$. 
	Left: Gross-Pitaevskii ($M=1$). Right: Many-body result using MCTDHB with $M=10$ orbitals. 
	There is no visible difference between the $M=1$ and the $M=10$ result. Interaction parameter $\lambda=0.9$. Number of particles $N=10$. Oscillation period $T=2\pi$.  
        All units are dimensionless.}
 \label{F1}
\end{center}
\end{figure}

\subsection{Depletion and interaction energy}
In order to quantify the degree to which angular momentum conservation is violated, we chose an extremely generic system: 
a BEC that oscillates in a harmonic trap in two spatial dimensions. 
According to (\ref{GPviolation2}) the violation $\frac{d}{dt}{\langle L_z^2\rangle}_{GP}$ is proportional to $N\lambda$.
We therefore choose a fixed value $\lambda=0.9$. As a trapping potential we use  $V(\r)=\frac{1}{2}\omega^2(x^2+y^2)$ with $\omega=1$ 
and solve the many-body Schr\"odinger equation using the MCTDHB method for different numbers of orbitals $M$ in order 
to converge to the exact many-body result. To observe this convergence we use $N=10$ bosons 
for the many-body simulations for which we use up to $M\le10$ orbitals. For the largest simulations, i.e. for $M=10$ the Hilbert space then uses 
92378 time-adaptive permanents as opposed to just one for $M=1$. 

\begin{figure}
 \begin{center}
  \includegraphics[angle=-90, width=10cm]{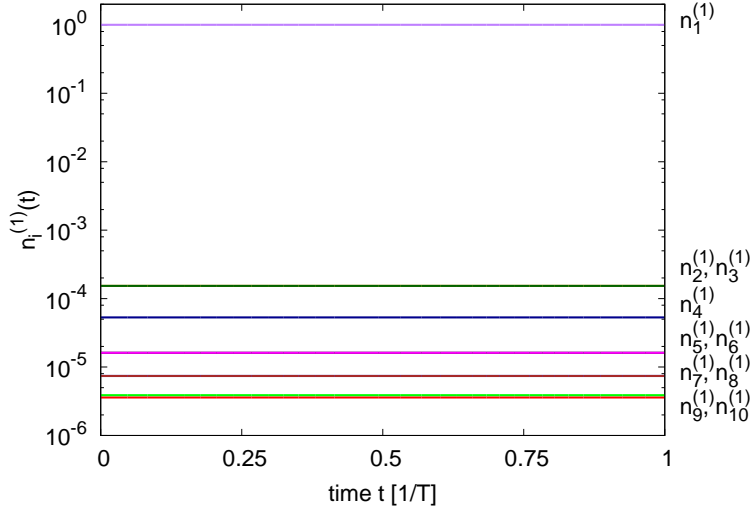}%
   \caption{{\bf Evolution of natural occupation numbers.} Shown are the natural orbital occupation numbers of the many-body simulation using $M=10$ orbitals.
	The pairs of occupation numbers $n_2^{(1)},n_3^{(1)}$ as well as $n_5^{(1)},n_6^{(1)}$ and $n_7^{(1)},n_8^{(1)}$ are quasi degenerate and the respective lines lie on top of each other.
	The condensate is only weakly depleted with the largest natural orbital occupations being $n_1^{(1)}=99.95\%$ and  $n_2^{(1)}=n_3^{(1)}=0.015\%$. For the time 
	shown here there is no visible change in the natural occupation numbers. The system remains condensed throughout. All units are dimensionless.}
   \label{F2}
 \end{center}
\end{figure}
 
The GP ground state energy per particle is $E_{GP}/N=1.0675$ and the ratio of interaction energy to total energy is only about  $6\%$, 
so the interaction energy is only small compared to the kinetic energy.  The energy hardly changes when $M>1$ orbitals are used:
for $M=2$ the ground state energy per particle is $E_{M=2}/N=1.0673$ and for $M=9,10$ one finds $E_{M=10}/N=1.0666$. 
The natural occupations converge to values near the GP result. For $M=9,10$ one finds only a small depletion
of the first natural orbital with $n^{(1)}_1/N=0.99959$. One would therefore expect the GP mean-field to be a 
good approximation to the true many-body results.

\begin{figure}
\begin{center}
 \includegraphics[angle=-90, width=9cm]{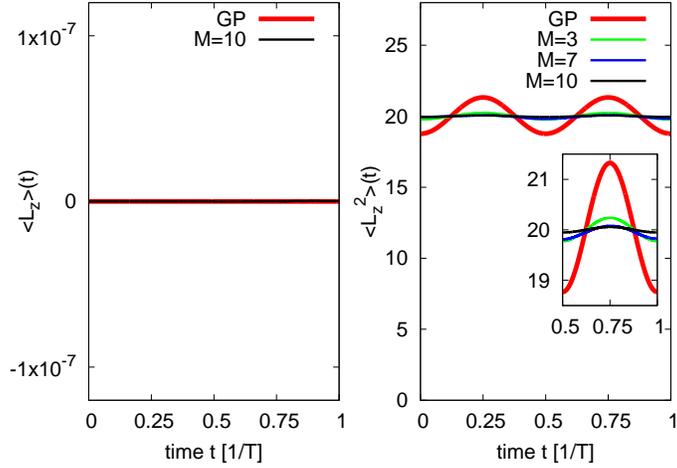}%
 \caption{{\bf Conservation of angular momentum.} Shown are the expectation values $\langle L_z\rangle(t)$ (left) and $\langle L_z^2\rangle(t)$ (right) for the dynamics shown in Fig. 1.
The Hamiltonian commutes with $L_z$ and hence in the exact many-body Schr\"odinger dynamics $\langle L_z\rangle(t)$ and $\langle L_z^2\rangle(t)$ are time-independent. 
Gross-Pitaevskii mean-field theory  has a stationary mean-value $\langle L_z\rangle(t)$, but violates angular momentum conservation through a 
time-dependent $\langle L_z^2\rangle(t)$. The Gross-Pitaevskii result 
oscillates with an amplitude of about 13\% of the absolute value instead of being constant. 
Many-body MCTDHB simulations using $M=3,7,10$ approach the exact constant result for increasing numbers of orbitals. The oscillations for 
the $M=10$ MCTDHB result are only about 0.5 \% of the absolute value, i.e. twenty-six times better. All units are dimensionless.}
 \label{F3}
\end{center}
\end{figure}

\subsection{Evolution of the single-particle density and the natural occupations}
For each value of $M$ we displace the respective ground state from the origin to $(x_0,y_0)=(2,0)$ and propagate 
the many-body Schr\"odinger equation using the MCTDHB method. The condensate sloshes back and forth in the trap with a time period  $T=2\pi$, 
as shown in Fig. \ref{F1}. The two columns show the single particle density $\rho(\r,t)$ computed using $M=1$ and $M=10$ orbitals at times $t=0,t=T/4$ and $t=T/2$.
There is no visible difference between the GP mean-field and the many-body result. The single-particle density is essentially the same for GP mean-field theory and the $M=10$ many-body result.  
There is practically no change in the natural occupation numbers during the dynamics.
This is shown in Fig. \ref{F2}: not even the smallest natural occupations vary significantly during the first oscillation of the condensate in the trap.
To a very good approximation the condensate remains condensed. 

\subsection{Conservation of angular momentum}
We now turn to the conservation of angular momentum. The trap is cylindrically symmetric, therefore the exact solution of the Schr\"odinger equation 
conserves angular momentum. In the left panel of Fig. \ref{F3} it can be seen that the expectation value $\langle L_z\rangle(t)$ is precisely zero and conserved for both by 
GP mean-field theory as well as the many-body result ($M=10$), just as is expected based on the discussion in sections \ref{subConservationOneBody} and \ref{GPtheory}.

The right panel of Fig. \ref{F3} shows the time evolution of the expectation value  $\langle L_z^2\rangle(t)$ for different numbers of orbitals. 
The GP result for  $\langle L_z^2\rangle(t)$ oscillates at twice the trap frequency within a range of about $13\%$ of its average value. 
Maxima of $\langle L_z^2\rangle(t)$ correspond to the times when the condensate has maximal momentum, i.e. at the origin of the trap, 
whereas the minima correspond to the turning points. The GP dynamics thus still violates the conservation of angular momentum.
In fact, the violation is quite strong ($13\%$), 
despite the fact that the interaction energy accounts for merely $6\%$ of the total energy and the 
depletion is just $\approx 4\times10^{-4}$. 

\begin{figure}
\begin{center}
 \includegraphics[angle=-90, width=8cm]{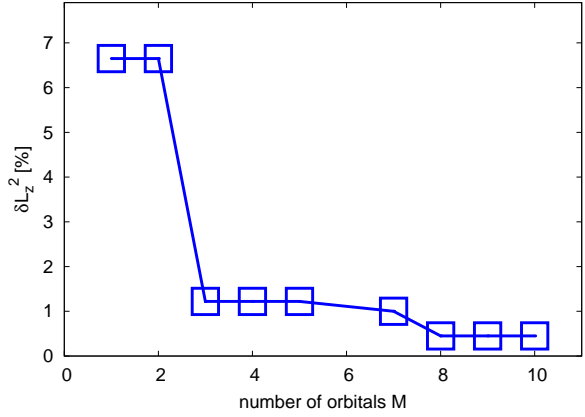}%
 \caption{{\bf Convergence towards the exact result.} Shown is $\delta L_z^2$, the largest relative deviation of $\langle L_z^2\rangle(t)$ from the 
time averaged value, $\overline{\langle L_z^2\rangle}$ as a function of the number of orbitals used in the computation. 
For increasing numbers of orbitals  and thus larger many-body Hilbert spaces the oscillation amplitude decays. See text for details. All units are dimensionless.}
 \label{F4}
\end{center}
\end{figure}

For the MCTDHB simulations with $M>1$ it can be seen that for increasing $M$ the amplitude of the oscillations decreases and approaches a constant, 
i.e. angular momentum conservation is violated less and less, the larger $M$ is. 
The reason is that the space of allowed variations gets larger and larger with increasing $M$ and therefore
a larger fraction of the vector $\vert L_z^2\Psi\rangle$ lies in the space of allowed variations. 
According to Theorem (\ref{Thm1}) the magnitude of the violation of angular momentum conservation then decreases.
The violation can be quantified by defining the time average 
$\overline{\langle L_z^2\rangle} = \frac{1}{T}\int_0^T \langle L_z^2\rangle(t) dt$ and the largest relative deviation
\begin{equation}
\delta L_z^2 = \max_{t\in[0,T]} \frac{\vert{\langle L_z^2\rangle-\overline{\langle L_z^2\rangle}}\vert}{{\overline{\langle L_z^2\rangle}}},
\end{equation}
shown in Fig \ref{F4} as a function of the number of orbitals used in the computation. 
For increasing numbers of orbitals $\langle L_z^2\rangle(t)$ approaches a constant, as required by angular momentum conservation.
In other words the results converge with increasing $M$ to the exact result. 
However, even when $M=10$ orbitals are used, there still remains a visible
oscillation with an amplitude of about $0.5\%$ of $\overline{\langle L_z^2\rangle}$.
As the permanents used in the numerical computations are determined from the variational principle the 
many-body results cannot be improved by choosing a more suited basis set of the same size. 
This is equally true for GP theory as well as in the many-body simulations. Hence angular momentum conservation 
is only possible on the many-body level. The significant size of the violation observed, implies that $\langle L_z^2 \rangle$ is a sensitive probe for 
beyond mean-field effects if the full many-body Hamiltionian satisfies $[H,L_z]=0$.

\section{Mean-Field violation of momentum conservation}
In this section we provide an example for a dynamic violation of total momentum conservation -- or equivalently translational symmetry -- by the GP mean-field. 
We set $A=P$, where 
\begin{equation}\label{Ptot}
P=\sum_{j=1}^N p_{j},
\end{equation}
is the total momentum operator with $p$ the single particle momentum operator. We work in $D=1$ dimensions
and consider free space, i.e. $V(x)=0$ in (\ref{SGl}), such that $[H,P] = 0$.
We introduce a momentum chirped orbital
\begin{equation}\label{chirpedorb}
\phi(x,0)=\left( \frac{1}{\pi \sigma_0^2}\right)^{1/4} e^{-\frac{(x-x_0)^2}{2\sigma_0^2}} e^{ip_0x^2}
\end{equation}
and evaluate Eqs. (\ref{GPviolatesA}-\ref{GPviolatesA2}). 
The result of the calculation is 
\begin{eqnarray}
\frac{d}{dt}{\langle  P   \rangle}_{GP}&=&0\label{GPpviolation1}\\
\frac{d}{dt}{\langle  P^2 \rangle}_{GP}&=& N\lambda\frac{2 p_0}{\sqrt{2\pi\sigma_0^2}} \neq 0\label{GPpviolation2}.
\end{eqnarray}
As for angular momentum, we see that the mean-value of total momentum $\langle P\rangle_{GP}$ is stationary as can be understood on the basis of  Eq. (\ref{deltaPsiGP}) and Theorem \ref{Thm1}. 
However, Eq. (\ref{GPpviolation2}) is non-zero and constitutes an explicit violation of momentum conservation, by the GP mean-field.
Similar to the displacement in real space in Eqs. (\ref{GPviolation1}-\ref{GPviolation2}) the violation of momentum conservation in (\ref{GPpviolation2}) 
is linear in $p_0$. Also higher moments can be evaluated. These are also time-dependent and thus contribute to the violation of momentum conservation in GP theory, please see
Appendix \ref{Lz3}. 
\begin{figure}
\begin{center}
 \includegraphics[angle=0, width=12cm]{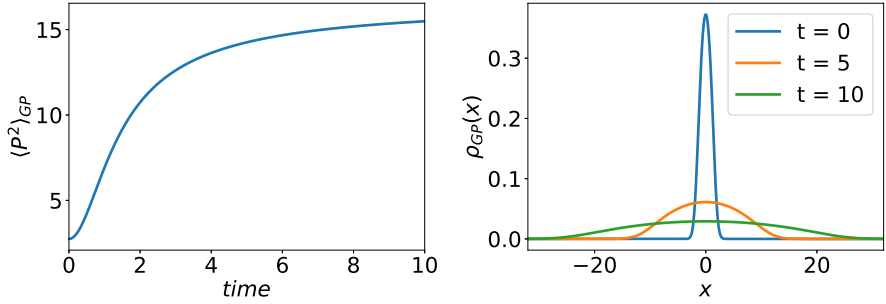}
 \caption{{\bf  Mean-field violation of momentum conservation.} A condensate is released from a harmonic trap in one spatial dimension into free space. 
		Left: shown is the Gross-Pitaevskii mean-field result ($\lambda=5$) for $\langle P^2 \rangle$. Momentum conservation requires $\langle P^2 \rangle$ to be constant in time. The mean-field 
		dynamics violates momentum conservation. 
		Right: corresponding one particle density at different times. 
		See text for details. All quantities shown are dimensionless.}
 \label{P2viol}
\end{center}
\end{figure}
As a numerical example, we consider a BEC  which is released from a harmonic trap in 1D. 
A detailed many-body analysis of this problem with a closely related focus can be be found in \cite{UncertaintyProduct}. 
We only provide the mean-field result to show the extent to which the GP dynamics violates the conservation of momentum.

The BEC is prepared in the GP ground state of the potential $V(x)=\frac{1}{2}x^2$ at an interaction strength $\lambda=5$ on an interval $[-64,64]$. At time $t=0$ 
the trap switched off and the GP equation is propagated in time.  The density expands as the BEC is released from the trap. 
Fig. \ref{P2viol} (left) shows the time evolution of $\langle P^2 \rangle$. It can clearly be seen that $\langle P^2 \rangle$ is not constant in time, i.e.
the GP dynamics strongly violates the conservation of total momentum $[H,P]=0$. 
The right panel  of Fig. \ref{P2viol} shows the corresponding single-particle density of the GP solution at different times. 

We would like to stress that the kinetic energy of the system is not conserved in the exact dynamics.  
In a many-body system the kinetic energy is not proportional to $P^2$. So while  the exact result for 
$\langle P^2\rangle$ is to be constant in time, $\langle\sum_{i=1}^N p_i^2 \rangle$ is  time-dependent. 
We have included a calculation that illustrates this point in detail in Appendix \ref{kinetic}.

\section{Conclusion}
In this work we have provided a rigorous explanation why conservation laws 
in BEC dynamics can only be satisfied on the many-body level. Specifically we have shown that the popular 
GP mean-field violates angular momentum and momentum conservation in analytical and numerical examples. 
For an observable $A$ that is conserved $[H,A]=0$ in the exact dynamics, 
the GP ansatz wave function is flexible enough to keep the first moment $\langle A\rangle$ constant in time, 
but not the higher moments $\langle A^n\rangle$ for $n>1$. We have shown that this is equivalent to a violation of the conservation law $[H,A]=0$. This is experimentally 
verifiable by measuring $A$: the distribution of measured values $P({\mathpzc A})$ depends on all moments  $\langle A^n\rangle$ of $A$
and a single time-dependent moment makes $P({\mathpzc A})$ time-dependent.
Using the  MCTDHB method and successively more general ansatz wave functions, 
we have shown how this improves the accuracy to which conservation laws are satisfied, as required by the exact dynamics.
In summary, satisfying conservation laws in the time dynamics requires many-body theory even for weakly interacting systems at low depletion.
It will be particularly interesting to investigate the  implications of these result for the creation of vortices in Bose-Einstein condensates.


\section*{Acknowledgements}
We thank Ofir Alon, Alexej Streltsov, Raphael Beinke and Axel Lode for constructive discussions.
We thank the HLRS -- High Performance Computing Center Stuttgart as well as the Vienna Scientific Cluster 
for providing computational resources for the many-body simulations.


\paragraph{Funding information}
Financial support was provided by the ERC advanced grant ‘QuantumRelax’, Project ID: 320975.

\begin{appendix}

\section{Time-indepedence of moments of conserved observables}\label{proof}
We provide an alternative proof for the fact that conserved observables have constant moments.
The proof is adopted from the textbook \cite{CohenTannoudji} (chapter 3 D.2.c. property (iii)).
\begin{theorem}\label{thm2}
{Let $\Psi$ be a wave function, $H$ a Hamiltonian, $\frac{\partial H}{\partial t} =0$, 
and $A$ an observable, $\frac{\partial A}{\partial t} =0$ and  $[H,A]=0$, then 
$\frac{d}{dt}\langle \Psi \vert A^n \vert \Psi\rangle = 0$ for all $n$.}
\end{theorem}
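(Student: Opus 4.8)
The plan is to reduce the statement to the standard Ehrenfest relation for the time evolution of expectation values and then to exploit the fact that every power of $A$ inherits the commutation property $[H,A]=0$.

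First I would establish the general identity that for any observable $B$ (allowing, for generality, an explicit time dependence) one has
\beq
\frac{d}{dt}\langle\Psi\vert B\vert\Psi\rangle = \langle\Psi\vert \frac{\partial B}{\partial t}\vert\Psi\rangle + \frac{1}{i}\langle\Psi\vert[B,H]\vert\Psi\rangle.
\eeq
This follows directly from the Schr\"odinger equation $i\frac{\partial}{\partial t}\vert\Psi\rangle = H\vert\Psi\rangle$, together with its adjoint $-i\frac{\partial}{\partial t}\langle\Psi\vert = \langle\Psi\vert H$ (using $H=H^\dagger$ and $\frac{\partial H}{\partial t}=0$), inserted into the product-rule expansion of $\frac{d}{dt}\langle\Psi\vert B\vert\Psi\rangle$.

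Second, I would specialize to $B=A^n$. Since $\frac{\partial A}{\partial t}=0$, the operator $A^n$ is also time-independent, so the first term drops out. It then remains to show $[A^n,H]=0$, which is immediate by induction on $n$: writing $[A^n,H]=A[A^{n-1},H]+[A,H]A^{n-1}$ and using $[A,H]=0$ both as the base case and at each step. Substituting back into the Ehrenfest relation yields $\frac{d}{dt}\langle\Psi\vert A^n\vert\Psi\rangle=0$, which is the claim.

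The algebra here is elementary; the only point requiring genuine care is the operator-domain question that arises because the observables of interest ($L_z$, $P$) are unbounded, so that the differentiation under the inner product and the induction on commutators must be performed on a suitable invariant domain. I would simply note that these manipulations are valid whenever $\Psi(t)$ remains in the domain of the relevant powers of $A$ and $H$, as is assumed throughout; alternatively, one can appeal to the equivalent argument already given in the main text using the Baker-Campbell-Hausdorff expansion of $e^{iHt}A^n e^{-iHt}$, in which the vanishing of $[A^n,H]$ makes the time-independence manifest term by term. This technical caveat, rather than any real difficulty in the reasoning, is the only subtle aspect of the proof.
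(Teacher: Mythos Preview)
Your argument is correct, but it differs from the proof the paper actually gives for this theorem in Appendix~\ref{proof}. There, the authors exploit that commuting self-adjoint operators $H$ and $A$ admit a common eigenbasis $\{\vert E_j,\lambda_k\rangle\}$; expanding $\vert\Psi(t)\rangle=\sum_{j,k}C_{j,k}(0)e^{-iE_jt}\vert E_j,\lambda_k\rangle$, the moduli $\vert C_{j,k}(t)\vert^2$ are manifestly time-independent and hence so is $\langle A^n\rangle=\sum_{j,k}\vert C_{j,k}\vert^2\lambda_k^n$. Your route via the Ehrenfest relation plus the inductive identity $[A^n,H]=A[A^{n-1},H]+[A,H]A^{n-1}$ is closer in spirit to the Baker--Campbell--Hausdorff argument already sketched in Section~\ref{subConservation}, and has the advantage of being purely algebraic: it never requires constructing or invoking the joint spectral decomposition, and it extends without modification to operators with continuous spectrum. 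The paper's eigenbasis proof, on the other hand, makes the connection to the stationarity of the full probability distribution $P(\mathpzc{A})$ immediate, since the $\vert C_{j,k}\vert^2$ are precisely the measurement probabilities. Your remark about unbounded-operator domains is well taken and applies equally to both approaches; neither the paper nor you need to dwell on it at this level of rigor.
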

\begin{proof} Since $[H,A]=0$, $H$ and $A$ have a common set of eigenfunctions $\{\vert E_j,\lambda_k \rangle\}$, i.e. \begin{eqnarray} 
H \vert E_j,\lambda_k \rangle &=& E_j  \vert E_j,\lambda_k \rangle\\
A \vert E_j,\lambda_k \rangle &=& \lambda_k \vert E_j,\lambda_k \rangle.
\end{eqnarray} 
Expanding $\vert \Psi(t)\rangle$ in this basis yields $\vert \Psi(t)\rangle = \sum_{j,k} C_{j,k}(t) \vert E_j,\lambda_k \rangle$ with $C_{j,k}(t) =\\ C_{j,k}(0) \exp(-iE_{j}t)$
and therefore
\begin{equation}
\langle A^n\rangle(t) = \sum_{j,k} \vert  C_{j,k}(t) \vert^2 \lambda_k^n = \sum_{j,k} \vert  C_{j,k}(0) \vert^2 \lambda_k^n = \langle A^n\rangle(0),
\end{equation}
i.e. $\frac{d}{dt}\langle \Psi \vert A^n \vert \Psi\rangle = 0$ for all $n$.
\end{proof}

\section{Mean-Field violation of conservation laws by $\langle A^3 \rangle_{GP}$}\label{Lz3}
In the main text we showed that $[H,A]=0$ is dynamically violated in GP theory for an operator A defined in \ref{defA}
by  evaluating $\frac{d}{dt}\langle  A^2 \rangle$. Here we show that also higher moments violate $[H,A]=0$.
The operator $A^3$ can be written as 
\begin{equation}
A^3 = \sum_i^N a_i^3 + 3 \sum_{i\neq j}^N a_i^2 a_j + \sum_{i\neq j \neq k}^N a_i a_j a_k.
\end{equation}
In second quantization
\begin{eqnarray}
A^3 &=& \sum_{ij} b_i^\dagger b_j \langle \phi_i \vert a^3\vert \phi_j\rangle \nonumber \\
&+& 3 \sum_{ijkl} b_i^\dagger b_j^\dagger  b_k b_l  \langle \phi_i\vert a^2 \vert \phi_k \rangle\langle\phi_j\vert a\vert\phi_l\rangle\nonumber\\
&+& \sum_{ijklmn} b_i^\dagger b_j^\dagger  b_k^\dagger b_l b_m b_n  \langle \phi_i\vert a \vert \phi_l \rangle \langle\phi_j\vert a\vert\phi_m\rangle \langle\phi_k\vert a\vert\phi_n\rangle
\end{eqnarray}
and therefore 
\begin{eqnarray}
\langle A^3 \rangle &=& \sum_{ij} \rho_{ij} \langle \phi_i \vert a^3\vert \phi_j\rangle \nonumber \\
&+& 3 \sum_{ijkl} \rho_{ijkl}  \langle \phi_i\vert a^2 \vert \phi_k \rangle\langle\phi_j\vert a\vert\phi_l\rangle\nonumber\\
&+& \sum_{ijklmn} \rho_{ijklmn}  \langle \phi_i\vert a \vert \phi_l \rangle \langle\phi_j\vert a\vert\phi_m\rangle \langle\phi_k\vert a\vert\phi_n\rangle
\end{eqnarray}
with the three-particle reduced density matrix elements
\begin{eqnarray}\label{3RDM}
\rho_{ijklmn}&=&\langle\Psi\vert b_i^\dagger  b_j^\dagger b_k^\dagger b_l  b_m b_n\vert\Psi\rangle. 
\end{eqnarray}
Using the ansatz (\ref{GPPsi}) 
it follows from Theorem \ref{Thm1} that for GP theory
$\frac{d}{dt}\langle \phi \vert a \vert \phi \rangle=0$ and therefore the 3-body contribution drops out and one arrives at
\begin{equation}\label{A3par}
\frac{d}{dt}\langle  A^3\rangle_{GP} = N \frac{d}{dt}\langle \phi \vert a^3 \vert \phi\rangle + 3 N(N-1) \langle \phi \vert a \vert \phi \rangle \frac{d}{dt}\langle \phi \vert a^2 \vert \phi \rangle
\end{equation}

For a specific example how higher moments also violate angular momentum conservation we work in $D=2$ dimensions and use the initial state
\begin{equation}\label{orbgeneral}
\phi(x,y,0)=\left(\frac{1}{\pi \sigma_0^2}\right)^{1/2}e^{-\frac{(x-x_0)^2 + (y-y_0)^2}{2 \sigma_0^2}}e^{ikx}.
\end{equation}
Then in order to calculate (\ref{A3par}) for $A=L_z$ we evaluate
\begin{eqnarray}
\langle \phi\vert l_z \vert\phi \rangle&=& -k y_0\\
\frac{d}{dt}{\langle  \phi\vert l_z   \vert\phi  \rangle}&=&0\\
\frac{d}{dt}{\langle  \phi\vert l_z^2  \vert\phi \rangle}&=&-\lambda\frac{kx_0}{2\pi \sigma_0^2}\\
\frac{d}{dt}{\langle  \phi\vert l_z^3  \vert\phi \rangle}&=&\lambda \frac{3 k^2 x_0 y_0}{2\pi \sigma_0^2}\label{lz3}
\end{eqnarray}
and therefore
\begin{equation}
\frac{d}{dt}\langle  L_z^3\rangle_{GP} =  N^2\lambda \frac{3k^2x_0 y_0}{2\pi \sigma_0^2}
\end{equation}
Thus, $\langle L_z^3 \rangle$ is time-dependent in GP theory and contributes just like $\langle L_z^2 \rangle$ to the violation of angular momentum conservation. 

Similarly we evaluate the time derivative of the third moment of the total momentum operator in GP theory 
for the chirped orbital (\ref{chirpedorb}).
In order to calculate (\ref{A3par}) for $A=P$ we evaluate
\begin{eqnarray}
\langle \phi\vert p \vert\phi \rangle&=& 2 p_0 x_0\\
\frac{d}{dt}{\langle  \phi\vert p   \vert\phi  \rangle}&=&0\\
\frac{d}{dt}{\langle  \phi\vert p^2  \vert\phi \rangle}&=&\frac{2\lambda k p_0}{\sqrt{2\pi \sigma_0^2}}\\
\frac{d}{dt}{\langle  \phi\vert p^3  \vert\phi \rangle}&=&\frac{12 \lambda p_0^2x_0 }{\sqrt{2\pi\sigma_0^2}}\label{p3}
\end{eqnarray}
Thereby one arrives at 
\begin{equation}
\frac{d}{dt}\langle  P^3\rangle_{GP} =  N^2\lambda \frac{12 p_0^2x_0 }{\sqrt{2\pi\sigma_0^2}}
\end{equation} 
As before, $\langle  P^3\rangle_{GP}$ is time-dependent and contributes just like $\langle P^2 \rangle$ to the violation of momentum conservation.
Note how the contributions from the two- and three-particle reduced density matrices add up to precisely $N^2$ times the results (\ref{lz3}) and (\ref{p3}).
The evaluation of expressions for higher moments becomes more and more cumbersome.
However, there is no conceptual novelty in the evaluation of ever higher moments. 
Therefore, we stop here at the third moment.

\section{Kinetic energy non-conservation}\label{kinetic}
We show in an example that for interacting translationally invariant systems momentum is conserved, but 
the kinetic energy is not. We show this in order to prevent a possible confusion arising from the time-independece of $\langle P^2\rangle$ that follows from the conservation of $P$.

Consider two particles of the same mass and a purely distance dependent two-body interaction potential in $D=1$ dimensions, $W(r)$ with $r=\vert x_1-x_2 \vert$. 
The particles are described by a Hamiltonian $H = T + W$, where 
\begin{equation}
T=\frac{p_1^2}{2} +  \frac{p_2^2}{2}
\end{equation} 
is the kinetic energy. For the following we note that 
\begin{equation}
\sign(x_1-x_2) = \frac{x_1-x_2}{\vert x_1-x_2 \vert}. 
\end{equation}
To see that $P$ defined in Eq. (\ref{Ptot}) is conserved we note that $[P,T]=0$, since $[p_i,p_j]=0$. 
Furthermore $[P,W]=0$ as can be seen by adding the commutators
\begin{eqnarray}
\left[p_1,W\right] = -i\frac{dW(r)}{dr}\sign(x_1-x_2) \\
\left[p_2,W\right] = +i\frac{dW(r)}{dr}\sign(x_1-x_2).
\end{eqnarray}
Therefore $[P,H]=0$ and $P$ is conserved. It follows immediately from Eq. (\ref{Lzconserv}) that $\langle P^2 \rangle$ is time-independent.

To see that the kinetic energy is not conserved we evaluate
\begin{eqnarray}
\left[p_1^2,W\right]=-\frac{d^2 W(r)}{dr^2} -2i \frac{dW(r)}{dr} \sign(x_1-x_2) p_1\\
\left[p_2^2,W\right]=+\frac{d^2 W(r)}{dr^2} +2i \frac{dW(r)}{dr} \sign(x_1-x_2) p_2
\end{eqnarray}
and therefore, using $[T,H]=[T,W]$, we find that the kinetic energy is not conserved
\begin{equation}
[T,H] =  -i \frac{dW(r)}{dr} \sign(x_1-x_2)(p_1-p_2)\neq 0.
\end{equation}
This expresses the fact that in a many-body system the operator for the square of the total momentum is not proportional to the kinetic energy:
\begin{equation}
P^2 = \left( \sum_{i=1}^N p_i \right)^2 =  \sum_{i=1}^N p_i^2 + 2\sum_{i<j} p_i p_j = T + 2 \sum_{i<j} p_i p_j.
\end{equation}


%
\end{appendix}




\begin{thebibliography}{99}



\bibitem{Dirac}  P. A. M. Dirac, {\it Note on Exchange Phenomena in the Thomas Atom}, Proc. Cambridge Philos. Soc. {\bf 26}, 376 (1930), \doi{10.1017/S0305004100016108}.

\bibitem{Frenkel} J. Frenkel, {\it Wave Mechanics: advanced general theory}, Oxford, Clarendon Press, (1934).

\bibitem{KramerSaraceno} P. Kramer and M. Saraceno, {\it Geometry of the time-dependent variational principle} Springer, Berlin, (1981), \doi{10.1007/3-540-10579-4}.

\bibitem{SchuckRing1980} P. Ring and P. Schuck, {\em The Nuclear Many-Body Problem} (Springer-Verlag, New York, 1980), \url{https://www.springer.com/de/book/9783540212065}. 

\bibitem{Lowdin1963} P. Lykos and G. W. Pratt, {\it Discussion on The Hartree-Fock Approximation}, Rev. Mod. Phys. {\bf 35}, 496 (1963), \doi{10.1103/RevModPhys.35.496}.

\bibitem{Gro61} E. P. Gross, {\it Structure of a quantized vortex in boson systems}, Nuovo Cimento {\bf 20}, 454 (1961), \doi{10.1007/BF02731494}.

\bibitem{Pit61} L. P. Pitaevskii, {\it Vortex Lines in an Imperfect Bose Gas}, Sov. Phys. JETP {\bf 13}, 451 (1961), \url{http://www.jetp.ac.ru/cgi-bin/dn/e_013_02_0451.pdf}.

\bibitem{CalogeroDegasperis1975} F. Calogero and A. Degasperis, {\it Comparison between the exact and Hartree solutions of a one-dimensional many-body problem}, Phys. Rev. A {\bf 11}, 265 (1975), \doi{10.1103/PhysRevA.11.265}.

\bibitem{WamplerWilets1988} K. D. Wampler,  L. Wilets, {\em On the numerical solution of the Hill–Wheeler equation}, Computers in Physics {\bf 2}, 53 (1988); \doi{10.1063/1.168302}.

\bibitem{CCI} O. E. Alon and A. I. Streltsov and K. Sakmann and L. S. Cederbaum, {\it Continuous configuration-interaction for condensates in a ring}, Europhys. Lett. {\bf 67}, 8 (2004), \doi{10.1209/epl/i2004-10047-3}.

\bibitem{LandmanPRL2004} I. Romanovsky, C. Yannouleas, and U. Landman, {\it Crystalline Boson Phases in Harmonic Traps: Beyond the Gross-Pitaevskii Mean Field}, Phys. Rev. Lett. {\bf 93}, 230405 (2004), 
\doi{10.1103/PhysRevLett.93.230405}.

\bibitem{Rom08} I. Romanovsky, C. Yannouleas, and U. Landman, {\it Symmetry-conserving vortex clusters in small rotating clouds of ultracold bosons} Phys. Rev. A {\bf 78}, 011606(R), (2008), 
\doi{10.1103/PhysRevA.78.011606}.

\bibitem{MCTDHB1} A. I. Streltsov, O. E. Alon, and L. S. Cederbaum, {\it Role of Excited States in the Splitting of a Trapped Interacting Bose-Einstein Condensate by a Time-Dependent Barrier}, Phys. Rev. Lett. {\bf 99}, 030402 (2007), \doi{10.1103/PhysRevLett.99.030402}.

\bibitem{MCTDHB2} O. E. Alon, A. I. Streltsov, and L. S. Cederbaum, {\it Multiconfigurational time-dependent Hartree method for bosons: Many-body dynamics of bosonic systems}, Phys. Rev. A {\bf 77}, 033613 (2008), \doi{10.1103/PhysRevA.77.033613}; \url{http://MCTDHB.org}.

\bibitem{MCHB} A. I. Streltsov, O. E. Alon, and L. S. Cederbaum, {\it General variational many-body theory with complete self-consistency for trapped bosonic systems}, 
Phys. Rev. A {\bf 73}, 063626 (2006), \doi{10.1103/PhysRevA.73.063626}.

\bibitem{HIM} A. U. J. Lode, K. Sakmann, O. E. Alon, L. S. Cederbaum, and A. I. Streltsov, {\it Numerically exact quantum dynamics of bosons with time-dependent interactions of harmonic type}, Phys. Rev. A {\bf 86}, 063606 (2012), \doi{10.1103/PhysRevA.86.063606}.

\bibitem{VarianceSensitive1} S. Klaiman and O. E. Alon, {\it Variance as a sensitive probe of correlations}, Phys. Rev. A {\bf 91}, 063613 (2015), \doi{10.1103/PhysRevA.91.063613}.

\bibitem{VarianceSensitive2} S. Klaiman, R. Beinke,  L. S. Cederbaum, A. I. Streltsov, and O. E. Alon, {\it Variance of an anisotropic Bose-Einstein condensate}, Chemical Physics {\bf 509}, 45 (2018), \doi{10.1016/j.chemphys.2018.02.016}.

\bibitem{SakmannKasevich} K. Sakmann, M. Kasevich, {\it Single-shot simulations of dynamic quantum many-body systems}, Nature Physics {\bf 12}, 451 (2016), \doi{10.1038/nphys3631}.


\bibitem{Gla99} M. Naraschewski and R. J. Glauber, {\it Spatial coherence and density correlations of trapped Bose gases}, Phys. Rev. A {\bf 59}, 4595 (1999), \doi{10.1103/PhysRevA.59.4595}.

\bibitem{Sak08} K. Sakmann, A. I. Streltsov, O. E. Alon, and L. S. Cederbaum, {\it Reduced density matrices and coherence of trapped interacting bosons}, Phys. Rev. A {\bf 78}, 023615 (2008) \doi{10.1103/PhysRevA.78.023615}. 

\bibitem{PeO56} O. Penrose and L. Onsager, {\it Bose-Einstein Condensation and Liquid Helium}, Phys. Rev. {\bf 104}, 576 (1956), \doi{10.1103/PhysRev.104.576}.

\bibitem{NoS82} P. Nozi\`eres, D. Saint James, {\it Particle vs. pair condensation in attractive Bose liquids}, J. Phys. (France) {\bf 43}, 1133 (1982), \doi{10.1051/jphys:019820043070113300};
		P. Nozi\`eres, in {\it Bose-Einstein Condensation}, 
		edited by A. Griffin, D. W. Snoke, and S. Stringari,
		Cambridge University Press, Cambridge, (1995), \doi{10.1017/CBO9780511524240}.

\bibitem{leggett_book} A. J. Leggett, {\it Quantum Liquids: Bose condensation and Cooper pairing in condensed matter systems},
                       Oxford University Press, Oxford, (2006), \doi{10.1093/acprof:oso/9780198526438.001.0001}.

\bibitem{Hofferberth} Hofferberth, S.  I. Lesanovsky, T. Schumm, A. Imambekov, V. Gritsev, E. Demler, J. Schmiedmayer, {\it Probing quantum and thermal noise in an interacting many-body system}, Nature Physics, {\bf 4}, 489 (2008), \doi{10.1038/nphys941}.

\bibitem{Betz} T. Betz, S. Manz, R. B\"ucker, T. Berrada, C. Koller, G. Kazakov, I. E. Mazets, H.-P. Stimming, A. Perrin, T. Schumm, J. Schmiedmayer, 
{Two-point phase correlations of a one-dimensional bosonic Josephson junction}, Phys. Rev. Lett. {\bf 106}, 020407 (2011), \doi{10.1103/PhysRevLett.106.020407}.

\bibitem{Gring} M. Gring, M. Kuhnert, T. Langen, T. Kitagawa, B. Rauer, M. Schreitl, I. Mazets, D. A. Smith, E. Demler, J. Schmiedmayer  {\it Relaxation and Prethermalization in an Isolated Quantum System}, Science, {\bf 337}, 1318, (2012), \doi{10.1126/science.1224953}.

\bibitem{Kuhnert} M. Kuhnert, R. Geiger, T. Langen, M. Gring, B. Rauer, T. Kitagawa, E. Demler, D. A. Smith, J. Schmiedmayer, {\it Multimode dynamics and emergence of a characteristic length-scale in a one-dimensional quantum system},  Phys. Rev. Lett. {\bf 110}, 090405 (2013), \doi{10.1103/PhysRevLett.110.090405}.

\bibitem{Klenke} A. Klenke, {\it Probability Theory, A Comprehensive Course}, Springer London (2014), \doi{10.1007/978-1-4471-5361-0}.

\bibitem{Schwabl} F. Schwabl, {\it Quantum Mechanics}, Spinger Berlin Heidelberg (2007), \doi{10.1007/978-3-540-71933-5}.

\bibitem{McLachlan} A. D. McLachlan, {\it A variational solution of the time-dependent Schrodinger equation}, Mol. Phys. {\bf 8}, 39 (1964), \doi{10.1080/00268976400100041}. 

\bibitem{KucarMeyerCederbaum}  J. Kucar, H.-D. Meyer, and L. S. Cederbaum, {\it Time-dependent rotated hartree approach}, Chem. Phys. Lett. {\bf 140}, 525 (1987), \doi{10.1016/0009-2614(87)80480-3}.

\bibitem{MCTDHPhysRep} M. H. Beck, A. J\"ackle, G. A. Worth, and H.-D. Meyer, {\it The multiconfiguration time-dependent Hartree (MCTDH) method: a highly efficient algorithm for propagating wavepackets}, Phys. Rep. {\bf 324}, 1 (2000), \doi{10.1016/S0370-1573(99)00047-2}.

\bibitem{LubichPoisson} E. Faou, C. Lubich, {\it A Poisson Integrator for Gaussian Wavepacket Dynamics}, Comput. Visual. Sci. {\bf 9}, 45 (2006), \doi{10.1007/s00791-006-0019-8}.



\bibitem{BJJexact} K. Sakmann, A. I. Streltsov, O. E. Alon, and L. S. Cederbaum, {\it Exact Quantum Dynamics of a Bosonic Josephson Junction},
                   Phys. Rev. Lett. {\bf 103}, 220601 (2009), \doi{10.1103/PhysRevLett.103.220601}.

\bibitem{PNAS} A. U. J. Lode, A. I. Streltsov, K. Sakmann, O. E. Alon, and L. S. Cederbaum, {\it How an interacting many-body system tunnels through a potential barrier to open space}, Proc. Natl. Acad. Sci. USA {\bf 109}, 13521 (2012), \doi{10.1073/pnas.1201345109}.

\bibitem{mapping} A. I. Streltsov, O. E. Alon, and L. S. Cederbaum, {\it General mapping for bosonic and fermionic operators in Fock space}, Phys. Rev. A {\bf 81}, 022124 (2010), \doi{10.1103/PhysRevA.81.022124}. 

\bibitem{UncertaintyProduct} S. Klaiman, A. I. Streltsov, O.E. Alon, {\it Uncertainty product of an out-of-equilibrium many-particle system} Phys. Rev. A {\bf 93}, 023605 (2016), \doi{10.1103/PhysRevA.93.023605}.

\bibitem{CohenTannoudji} C. Cohen-Tannoudji, B. Diu, F. Lalo\"{e}, {\it Quantum Mechanics, Volume I}, Wiley, (1991), \url{https://www.wiley.com/en-us/Quantum+Mechanics%2C+Volume+1-p-9780471164333}.

















\end{thebibliography}

\nolinenumbers

\end{document}